\documentclass[twocolumn,english,prl, letterpaper,showpacs]{revtex4}
\pdfoutput=1

\usepackage{amsthm}
\usepackage{amsmath}
\usepackage{graphicx}
\usepackage{amssymb}
\usepackage{esint}
\usepackage[unicode=true,pdfusetitle,
 bookmarks=true,bookmarksnumbered=false,bookmarksopen=false,
 breaklinks=false,pdfborder={0 0 1},backref=false,colorlinks=false]
 {hyperref}

\makeatletter
\@ifundefined{textcolor}{}
{%
 \definecolor{BLACK}{gray}{0}
 \definecolor{WHITE}{gray}{1}
 \definecolor{RED}{rgb}{1,0,0}
 \definecolor{GREEN}{rgb}{0,1,0}
 \definecolor{BLUE}{rgb}{0,0,1}
 \definecolor{CYAN}{cmyk}{1,0,0,0}
 \definecolor{MAGENTA}{cmyk}{0,1,0,0}
 \definecolor{YELLOW}{cmyk}{0,0,1,0}
 }
\theoremstyle{plain}
\newtheorem{thm}{Theorem}
\theoremstyle{definition}
\newtheorem{defn}[thm]{Definition}
\theoremstyle{plain}
\newtheorem{prop}[thm]{Proposition}
 \ifx\proof\undefined\
   \newenvironment{proof}[1][\proofname]{\par
     \normalfont\topsep6\p@\@plus6\p@\relax
     \trivlist
     \itemindent\parindent
     \item[\hskip\labelsep
           \scshape
       #1]\ignorespaces
   }{%
     \endtrivlist\@endpefalse
   }
   \providecommand{\proofname}{Proof}
 \fi

\makeatother
\usepackage{tikz}

\makeatother

\begin{document}
\global\long\def\ket#1{|#1\rangle}
\global\long\def\bra#1{\langle#1|}
\global\long\def\id{I}
\global\long\def\proj#1{|#1\rangle\langle#1|}
\global\long\def\ketbra#1#2{|#1\rangle\langle#2|}
\global\long\def\braket#1#2{\langle#1|#2\rangle}
\def\eq#1{Eq.~\eqref{eq:#1}}
\def\fig#1{Fig.~\ref{fig:#1}}
\def\eq#1{Eq.~\eqref{eq:#1}}
\def\II{1\!\mathrm{l}}
\def\cA{\mathcal{A}}
\def\cB{\mathcal{B}}
\def\cC{\mathcal{C}}
\def\cD{\mathcal{D}}
\def\cE{\mathcal{E}}
\def\cF{\mathcal{G}}
\def\cG{\mathcal{G}}
\def\cH{\mathcal{H}}
\def\cI{\mathcal{I}}
\def\cJ{\mathcal{J}}
\def\cK{\mathcal{K}}
\def\cL{\mathcal{L}}
\def\cM{\mathcal{M}}
\def\cN{\mathcal{N}}
\def\cO{\mathcal{O}}
\def\cP{\mathcal{P}}
\def\cQ{\mathcal{Q}}
\def\cR{\mathcal{R}}
\def\cS{\mathcal{S}}
\def\cT{\mathcal{T}}
\def\cU{\mathcal{U}}
\def\cV{\mathcal{V}}
\def\cW{\mathcal{W}}
\def\cX{\mathcal{X}}
\def\cY{\mathcal{Y}}
\def\cZ{\mathcal{Z}}
\def\bL{{\bf L}}
\def\bR{{\bf R}}
\def\Tr{\mathrm{Tr}}

\title{Local topological order inhibits thermal stability in 2D}

\author{Olivier Landon-Cardinal}
\email{olivier.landon-cardinal@usherbrooke.ca}
\affiliation{D\'epartement de Physique, Universit\'e de Sherbrooke, Qu\'ebec, J1K 2R1, Canada}
\author{David Poulin}
\email{david.poulin@usherbrooke.ca}
\affiliation{D\'epartement de Physique, Universit\'e de Sherbrooke, Qu\'ebec, J1K 2R1, Canada}

\pacs{03.67.Pp, 03.65.Ud, 03.67.Ac}
\begin{abstract}
We study the robustness of quantum information stored in the degenerate
ground space of a local, frustration-free Hamiltonian with commuting terms on a 2D spin lattice. On
one hand, a macroscopic energy barrier separating the distinct ground
states under local transformations would protect the information from
thermal fluctuations. On the other hand, local topological order would shield
the ground space from static perturbations. Here we
demonstrate that local topological order implies a constant energy barrier, thus inhibiting thermal stability.
\end{abstract}
\maketitle

A self-correcting quantum memory \cite{Bacon06} is a physical system whose quantum state can be
preserved over a long period of time {\em without} the need for any external
intervention. The archetypical self-correcting classical memory is the
two-dimensional (2D) Ising ferromagnet. The ground state of this system is two-fold
degenerate---all-spin up and all-spin down---so it can store one bit of
information. If the memory is put into contact with a heat bath after being
initialized in one of these ground states, thermal fluctuations will lead to
the creation of small error droplets of inverted spins. The boundary of such
droplets are domain walls, i.e.,  one-dimensional excitations whose energy is
proportional to the droplet perimeter. If the temperature is below the critical
Curie temperature, the Boltzman factor will prevent the creation of macroscopic
error droplets. Thus, when the system is cooled down (either physically or
algorithmically) after some macroscopic storage time, it will very likely return
to its original ground state: the memory is thermally stable. 

This behaviour contrasts with the 1D Ising ferromagnet whose domain
walls are point-like excitations. Therefore, they can freely diffuse on the
chain at no energy cost. As a consequence, arbitrarily large error droplets can
form, so this 1D memory is thermally unstable. 

While the 2D Ising ferromagnet features thermal stability, it is
vulnerable to static, local perturbations. Indeed, an arbitrarily weak magnetic
field breaks the ground state degeneracy and favours one ground state over the
other.
 When this perturbed system is
subject to thermal fluctuations, the bulk
contribution of the magnetic field overwhelms the boundary tension of the domain
wall, so once error droplets reach a critical size, they rapidly
expand to corrupt the memory. This type of instability plagues any systems with a local order parameter, so they cannot be robust quantum memories. Indeed, distinct ground states give different values of this order parameter, so a local field coupling to the order parameter lifts degeneracy.

In 2D and higher, there exists quantum systems with no local order parameter and whose spectrum is
stable under weak, local perturbations. These systems have a degenerate ground
state separated from the other energy levels by a constant energy gap, and
perturbations only alter these features by an exponentially vanishing amount as
a function of the system size.  Kitaev's toric code~\cite{Kitaev03} is the best
known example of this type. However, excitations in Kitaev's code are point-like
objects---as for the 1D Ising model---so it does not offer a
macroscopic energy barrier protection to thermal
fluctuations~\cite{DKL+02,Bacon06,AFH07,AFH09,NO08a}.

In this work,
we study the possibility of combining the thermal stability of the
2D Ising model with the spectral stability of Kitaev's code to
obtain a robust quantum memory in 2D. 
We consider $d$-level spins located at the vertices $V$ of a 2D lattice $\Lambda = (V,E)$, with  Hamiltonian 
\begin{equation}
H = -\sum_{X \subset V} P_X, \ {\rm with}\  [P_X,P_Y] = 0 \ {\rm and} \ \|P_X\| \leq 1.
\label{eq:H}
\end{equation}
We denote the number of spins $N \equiv |V|$.
The term $P_X$ is supported on the subset $X$ of the spins, i.e., it acts trivially on the complement $\overline X = V-X$ of $X$. The Hamiltonian is local in the sense that $P_X = 0$ whenever
$X$ has radius larger than some constant $w$. Since we are only interested in
the ground state and scaling of the energy gap, we can assume without loss of
generality that each $P_X$ is a projector. We also assume that $H$ is
frustration-free, meaning that the ground states minimize the energy of each
term of the Hamiltonian separately, i.e., $P_X\ket{\psi_0} = \ket{\psi_0}$.
Then, the ground space $\mathcal{C}$ is the image of the {\em code projector} $P
= \prod_X P_X$ (henceforth, the $P_X = 0$ are not included in such products). 

This family of lattice models, called {\em local commuting projector code}
(LCPC) includes most models of topological order,
e.g.~\cite{Kitaev03,LW05,KKR10}. It has been proved that LCPC
have a stable spectrum \cite{BHM10, BH,MP} if they obey the following {\em
local topological order} condition.
\begin{defn}
[Local topological order]\label{def:Local-consistency}For any topologically trivial
region $A$, let $P_{A} = \prod_{X: X\cap A \neq \emptyset} P_X$ be  the product
of projectors that intersect region
$A$.
For a system with {\em local topological order}, the density matrices
$\rho_{A}=\mbox{Tr}_{\bar A}P$
and $\rho_{A}^{\textrm{loc}}=\mbox{Tr}_{\overline A}P_{A}$ have the same kernel and $\rho_A$ is proportional to a projector.
\end{defn}

Our main result is that any system with local topological order has only a
constant energy barrier between ground states, similar to Kitaev's code. In
fact, our proof is inspired by the behaviour of Kitaev's code: we exhibit a
procedure creating a point-like defect at one boundary of the lattice and
draging it across to another boundary, at a constant energy cost. The difficulty
stems from the fact that, as noted by Haah and Preskill~\cite{HP12}, it is
unclear whether this process can be realized unitarily in general.

The rest of the article
is organized as follows. We first introduce a few
definitions and review known facts about LCPCs. We then present a local noise
model that requires a constant amount of energy and demonstrate that (i) it can
be executed in a time that scales linearly with system size for any
system with local topological order, and (ii) it corrupts the information
content of the memory. We end with a discussion and general conclusions.

\smallskip
\noindent{\em Background---} Characterizing the thermal stability of a memory
requires detailed knowledge of its thermalization process. Since we seek to
address a broad class of systems, our analysis cannot be model specific. We thus
retain only two essential features common to all thermalization processes: (i)
the bath interacts locally with the system, and (ii) high-energy states are
penalized. As we now explain, we can combine these features to obtain a
sufficient condition for thermal stability.

We will say that a memory with Hamiltonian \eq H has an energy barrier at most $\Delta$ if there exists a ground state $\psi_0$ and a sequence of $T \in {\rm poly}(N)$ CPTP maps $\cE_k$, each acting locally on the system and a finite-dimensional ancilla $A$, such that (i) starting from $\psi_0$, the sequence returns the system to the ground space, (ii) in a state that differs from the initial state $\psi_0$,  and (iii)  the energy  of any intermediate state is at most $\Delta$ above the ground state energy $E_0$. More formally, these conditions are
\begin{align}
\Tr\left[P\cdot \cE_T\ldots \cE_2\cE_1(\psi_0\otimes \rho_A)  \right] \geq \frac 23 \\
\Tr\left[ \psi_0 \cdot \cE_T\ldots \cE_2\cE_1(\psi_0\otimes \rho_A )\right] \leq \frac 13 \\
 \max_{k\in 1,2,\ldots,T} \Tr \left[H \cdot \cE_k\ldots\cE_2\cE_1(\psi_0\otimes\rho_A)\right] - E_0 = \Delta
\end{align}
where $P$ is the code projector, i.e. projector onto the ground space of $H$, and the factors $\frac 23$ and $\frac 13$ are arbitrarily chosen constants. The additional ancillary system, initially in state $\rho_A$, is used to model some finite non-Markovian effects of the bath, so each map $\cE_k$ has complete access to it. The energy barrier of a memory is taken to be the smallest value of $\Delta$ over all such sequences of maps. If a memory has a macroscopic energy barrier $\Delta \geq N^\alpha$ for some constant $\alpha>0$, then any short sequence of local transformations that returns the system to an altered ground state must visit a high energy state, and is therefore thermally stable.  Our main result is obtained by exhibiting a sequence of maps $\cE_k$ with an energy barrier $\Delta$ that is a constant, independent of the system size $N$. In addition, we show that the length $T$ of this sequence is proportional to the linear size of the lattice when the system has local topological order.

We call {\em logical operator} an operator $L$ that maps the ground space to itself, i.e. $[L,P] = 0$ or equivalently $[L,P_X] = 0$ for all $X$.  In
particular, we are interested in logical operators that act non-trivially on the
code space, i.e., $LP \neq P$, as they can alter the encoded information. In a
series of paper~\cite{BT09,BPT10,HP12,KC08}, it was shown that LCPCs
always admit at
least one non-trivial logical
operator supported only on a 1D (constant width) strip of
the lattice. 

An important subclass of LCPCs are stabilizer codes \cite{Gottesman97},
for which $P_{X} = \frac 12(\id+S_X)$ with $S_X$ tensor-product of Pauli
matrices $\sigma_{0,1,2,3}$ (with $\sigma_0$ being the identity $\id$).
Because of this particular structure, the non-trivial string-like
logical operator described above is also a tensor product of Pauli matrices, $L =
\bigotimes_k^\ell \sigma^k_{j_k}$ where $k$ labels the $\ell$ sites along the
strip in some natural way, from left to right, say. 
Then, applying the error sequence
$\left\{\sigma_{j_k}^k\right\}$ will build up to the 
operator $L$, and will only visit intermediate states with a constant energy
above the ground state. Indeed, at an intermediate state
$0<n<\ell$, only a segment $\bigotimes_k^n \sigma^k_{j_k}$ of the
logical operator $L$ has been applied. This segment commutes with all terms
$P_X$ except the ones within distance $\cO(w)$ from site $k$, so only these
terms contribute to the energy: the excitations are point-like objects located
in the vicinity on the end of the string segment, so the memory is unstable \cite{KC08,BT09,Y11b,BDP12a}. This simple argument fails for
more general LCPCs because logical operators do not have a tensor product
structure.

\smallskip
\noindent\textit{Noise model}--- In this Section we present an error sequence $\left\{\cE_k\right\}$ that
achieves a constant energy barrier. To motivate our construction, let us first
consider a simplified version. In a first step, all particles on the strip
are removed and replaced by particles in random states. Mathematically, this action can be described by
the maximally depolarizing channel $\mathcal{D}_k$ on all the sites $k$ of the strip, which is on average equivalent to applying a Haar-random unitary operator on each site $k$,
\begin{equation}
\mathcal{D}_{k}\left[\psi\right]\equiv\mbox{Tr}_{k}\left[\psi\right]\otimes
\id_k/D=\int
U_{k}\proj{\psi}U_{k}^{\dagger}dU_{k}\mbox{.}
\label{eq:depolarizing-random-unitaries}
\end{equation}
At this point, it is clear (via no-cloning) that the information content of the
strip has been wiped out, and that it cannot be recovered by any subsequent
operation. To return the system to its ground space, we can measure the projector $P$, and, with lots of luck, obtain the
result $+1$. Of course this outcome is extremely unlikely, and for this reason we
will refer to this noise model as the {\em fortuitous model}. 
Note that prior to the measurement of $P$, we can apply any CPTP map on the
strip without altering our main conclusion that the information has been erased.

Clearly, the fortuitous model is not what we are seeking. Not only does it
rely on an extraordinary amount of luck, but the state after the first step has
energy proportional to the strip length. We will now show how to realize this
model sequentially to avoid both of these problems. To simplify the
presentation, we coarse-grain the lattice---i.e., we partition the lattice into
balls of radius $w$ and view each ball as sites corresponding to a single
$D$-level spin with $D = d^{\cO(w^2)}$---so we can assume without lost of
generality that: $\Lambda$ is a regular $\ell \times \ell$ square lattice; the
non-zero terms $P_X$ in \eq H act only on $2\times 2$ cells; and there exists a
non-trivial logical operator supported on a single line $\cL$ of the lattice. 
Projectors whose support intersect $\cL$ define the strip
projector $P_{\cL}=\prod_{X\cap \cL\neq\emptyset}P_{X}$
supported on the extended strip $\cL'$, see \fig{coarse-grained-strip}. Similarly,
projectors whose support intersect sites $k-1$ and $k$ on $\cL$ define
\emph{local constraints}  $P_{k-1,k}=\prod_{X\cap \{k-1,k\} \neq\emptyset}P_{X}$.

\begin{figure}
\begin{centering}
\includegraphics[width=0.9\columnwidth]{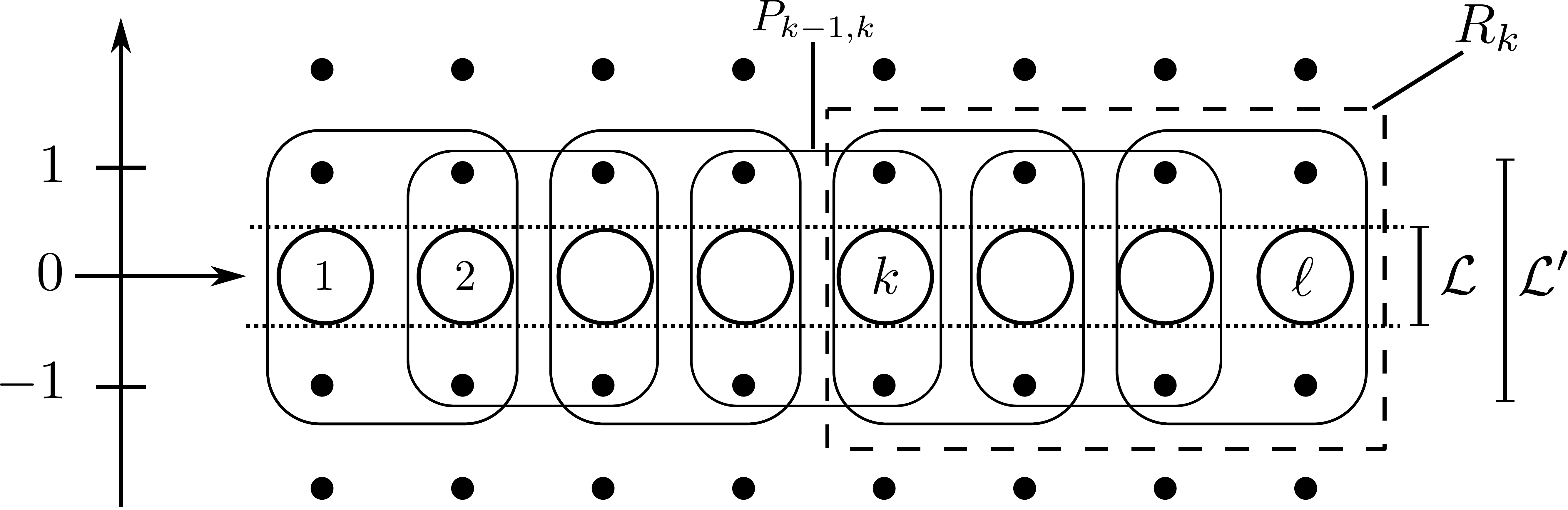}
\par\end{centering}
\caption{\label{fig:coarse-grained-strip}The strip $\cL$ contains $\ell$ sites
(large circles) whose Cartesian coordinates are $\{(k,0)\:1\leq k\leq \ell\}$.
Local constraints $P_{k-1,k}$ act on nearest-neighbours sites
$k-1$ and $k$ and on particles in the extended strip $\cL'=\{(i,j)\,\left|j\right|\leq1\}$. }
\end{figure}

The {\em sequential noise model} interleaves the depolarizing and
projection steps of the fortuitous model to obtain individual \emph{iterations}
for every site $k\in\mathcal{L}$,
which succeed with high probability. Each iteration
consists of several \emph{trials}. Trial $m$ of iteration $k$ corresponds
to (i) applying a \emph{trial unitary} $U_{k}^{(m)}$ on site $k$,
chosen at random from the Haar measure, and (ii) measuring the local
constraint $P_{k-1,k}$. Trials are repeated until a \emph{successful
trial} in which the $+1$ outcome of $P_{k-1,k}$ is obtained and
the next iteration begins. Given the state on the strip, a unitary is
\emph{eligible} if it leads to a successful trial with non-zero probability. 
The initial iteration $k=1$ differs since the constraint
is not measured. Physically, the whole procedure corresponds to
creating a random excitation at iteration 1, and moving it along the strip
across to the opposite edge by
subsequent iterations. 

The sequential noise model only creates intermediate states
of constant energy. The reason is the same as for stabilizer codes: the excitations are point-like objects. Indeed, during iteration $k$, the state is almost everywhere indistinguishable from a ground state because it obeys all constraints $P_{i-1,i}$, except $P_{k-1,k}$ and $P_{k,k+1}$ since only those potentially do not commute with $U_k^{(m)}$. Furthermore, a failed trial during iteration $k$
does not affect the outcome of previous iterations since local constraints commute.
Thus, we only need to show that the expected number of trials
at each iteration is independent of lattice size---so the total duration of the noise process grows linearly with the lattice size $\ell$--- and that the average
effect of a complete sequence of $\ell$ successful iterations has exactly the same effect as the fortuitous model. 

\smallskip
\noindent\textit{Expected number of trials}---
The sequential model would run into a dead-end, an iteration requiring an infinite number of trials, if the
state of the strip admits no eligible unitary at the $k$\textsuperscript{th}
iteration %
\footnote{Note that while the eligibility of a unitary depends on the state
in general, it cannot change if the state has only support on the
kernel of $P_{k-1,k}$.%
}. Such a dead-end occurs in the Ising-like toric code introduced in
\cite{BHM10}, where the plaquette operators $B_{p}$ of the toric
code are replaced by Ising-like interaction $B_{p}B_{q}$ whose symmetry is
broken by a single defect plaquette $B_{p^{*}}$ to recover the toric code ground state. The sequential model could start preparing the $B_{p}=-1$ sector and reach a
dead-end when it encounters the $B_{p^{*}}=+1$ constraint. However,
that code does not have local topological order, and its spectrum is indeed unstable ($B_{p^*}$ is a local order parameter). We now show that such dead-ends do not occur with local topological order.
\begin{prop}
Local topological order implies that, at any iteration $k$, there exists
an eligible unitary $U_{k}$.
\label{deadend}
\end{prop}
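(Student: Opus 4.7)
\emph{Plan.} The proof proceeds by showing that the expected success probability of a single trial, averaged over Haar-random unitaries $U_k$ on site $(k,0)$, is strictly positive; since the integrand is non-negative, this implies that some specific $U_k$ is eligible. Writing $R = \mathrm{supp}(P_{k-1,k})$, $R_- = R\setminus\{(k,0)\}$, $d$ for the local dimension at $(k,0)$, and $\rho_{R_-}$ for the reduction of $\rho$, a direct calculation gives
\begin{equation*}
\int \Tr\!\left[P_{k-1,k}\, U_k \rho U_k^\dagger\right] dU_k = \frac{1}{d}\,\Tr_{R_-}\!\left[\Tr_{(k,0)}[P_{k-1,k}]\cdot\rho_{R_-}\right],
\end{equation*}
so it suffices to show that the supports of $\Tr_{(k,0)}[P_{k-1,k}]$ and $\rho_{R_-}$ have nontrivial intersection in $\cH_{R_-}$.

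I will produce such an intersection by comparing $\rho_{R_-}$ to the reduction $\rho^{\mathrm{GS}}_{R_-}$ of an arbitrary ground state $\psi_0$. Since $P_{k-1,k}\psi_0=\psi_0$, one immediately has $\mathrm{supp}(\rho^{\mathrm{GS}}_{R_-})\subset\mathrm{supp}(\Tr_{(k,0)}[P_{k-1,k}])$, so it is enough to exhibit an overlap between $\mathrm{supp}(\rho_{R_-})$ and $\mathrm{supp}(\rho^{\mathrm{GS}}_{R_-})$. The state $\rho$ at iteration $k$ satisfies $P_X\rho=\rho$ for every $X$ that does not straddle the $(k-1,0)$--$(k,0)$ boundary, because each such $P_X$ commutes with the preceding unitaries $U_j$ on strip sites and was either enforced by a previously successful measurement $P_{j-1,j}$ or never touched at all, so $\rho$ lies in the image of $Q:=\prod_{X\not\supset\{(k-1,0),(k,0)\}}P_X$. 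Applying LTO to $R_-$ identifies $\rho^{\mathrm{GS}}_{R_-}$ (up to positive scalar) with the projector whose kernel matches that of $\rho^{\mathrm{loc}}_{R_-}=\Tr_{\overline{R_-}}P_{R_-}$. The only projectors present in $P_{R_-}$ but missing from the corresponding reduction of $Q$ are the straddling $P_X$, whose supports each contain the partially traced-out site $(k,0)$; after that partial trace they contribute only a positive-semidefinite operator, which leaves the relevant support unchanged on the subspace already compatible with $Q$. Consequently, $\rho_{R_-}$ sits in $\mathrm{supp}(\rho^{\mathrm{GS}}_{R_-})$, producing the required intersection.

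The hard part will be the last step---the interplay between LTO, which \emph{a priori} controls only ground-state reductions, and the larger space $\mathrm{image}(Q)$ in which $\rho$ actually lives. Making this step rigorous amounts to tracking how the partial trace on $(k,0)$ neutralizes exactly those straddling projectors that distinguish $Q$ from the full ground-space projector $P$, so that reductions of arbitrary states in $\mathrm{image}(Q)$ inherit the support structure of ground-state reductions. This is the structural feature absent in the Ising-like toric code counterexample, where the local order parameter $B_{p^*}$ survives the partial trace and produces a genuine incompatibility between the current state and the next constraint---exactly the dead-end that local topological order prohibits.
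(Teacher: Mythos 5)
Your opening reduction---Haar-averaging the single-trial success probability so that eligibility of some $U_k$ becomes equivalent to $\Tr\bigl[\Tr_{(k,0)}[P_{k-1,k}]\,\rho_{R_-}\bigr]>0$---is exactly the paper's first two steps (the paper phrases it contrapositively and traces out the whole right half of the strip, but the computation is the same). The divergence, and the gap, lies in how local topological order is then invoked. The step you yourself flag as ``the hard part'' is not merely unfinished: the bridging claim you offer for it is false as a general statement. You argue that the projectors present in $P_{R_-}$ but absent from $Q$ can be ignored because, containing the traced-out site $(k,0)$, they ``contribute only a positive-semidefinite operator'' after the partial trace and so ``leave the relevant support unchanged.'' A positive-semidefinite factor can shrink a support: take two qudits $a\in R_-$ and $b=(k,0)$ with no non-straddling projectors and a single straddling projector $P''=\proj{01}$; then $\Tr_b[\id]=2\,\id_a$ has full support while $\Tr_b[P'']=\proj{0}_a$ does not. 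The assertion $\mathrm{supp}\bigl(\Tr_{\overline{R_-}}[P'P'']\bigr)=\mathrm{supp}\bigl(\Tr_{\overline{R_-}}[P']\bigr)$ is essentially the proposition itself in local form---it says the constraints at $(k,0)$ can always be satisfied by a suitable state at $(k,0)$ once the neighbouring constraints hold---and it is precisely what fails in the Ising-like toric code, where $B_{p^*}$ survives the partial trace. Any correct proof must extract this from LTO, whereas your argument deploys LTO only afterwards, to identify $\rho^{\mathrm{loc}}_{R_-}$ with the ground-state reduction. The paper escapes this circularity by running the contrapositive: from a hypothetical dead-end it produces a single vector $\xi$, supported to the \emph{left} of site $k$, that satisfies all constraints around a small region $A$ (site $k-2$) yet cannot be completed on the right to satisfy $P_{k-1,k}$; hence $\Tr_{\overline A}[\xi]$ lies in the image of $\rho_A^{\textrm{loc}}$ but in the kernel of $\rho_A=\Tr_{\overline A}P$, contradicting LTO in one stroke.

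A secondary but real issue is your characterization of the state: $Q=\prod_{X\not\supset\{(k-1,0),(k,0)\}}P_X$ is too strong. The proposition must hold for the state after a \emph{failed} trial of iteration $k$ as well (otherwise the iteration could dead-end midway), and that state---having had a Haar-random unitary applied at $(k,0)$ followed by the projection $\id-P_{k-1,k}$---can violate any $P_X$ whose support contains $(k,0)$, including those not containing $(k-1,0)$. The statement that survives is only $\rho\in\mathrm{image}\bigl(\prod_{X\not\ni(k,0)}P_X\bigr)$, which enlarges the family of straddling projectors your key step would need to neutralize. Neither issue dooms the overall strategy, but as written the proof does not go through: the place where local topological order must do its work is exactly the place where you have substituted a description of the goal for an argument.
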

\begin{proof}
We will prove the contrapositive. Let $\psi$ be the state during the
$k\textsuperscript{th}$ iteration
for which no $U_{k}$ is eligible. We have 
\begin{equation}
P_{k-1,k}U_{k}\ket{\psi}=0 \quad \forall U_{k},
\end{equation}
whose average over the Haar measure, using Eq. \eqref{eq:depolarizing-random-unitaries},
is \begin{equation}
P_{k-1,k}\left(\mbox{\mbox{Tr}}_{k}\left[\psi\right]\otimes\id_{k}/D\right)=0\mbox{.}\label{eq:constraint-violation}\end{equation}
Tracing out the region $R_{k}=\{(i,j): i\geq k\;\left|j\right|\leq1\}\subset \cL'$
of the extended strip located at the right of site $k$, c.f. \fig{coarse-grained-strip}, Eq. \eqref{eq:constraint-violation}
yields
\begin{equation}
\mbox{Tr}_{k}\left[P_{k-1,k}\right]\mbox{Tr}_{R_{k}}\left[\psi\right]=0\mbox{.}
\end{equation}
Thus, there exists a state $\ket{\xi}$ in the support of $\mbox{Tr}_{R_{k}}\left[\psi\right]$
which is in the image of $P_{i-1,i}$ for $i<k$ but also is in the
kernel of $\mbox{Tr}_{k}\left[P_{k-1,k}\right]$. This entails violation
of local topological order on site $k-2$ since $\mbox{Tr}_{\overline{k-2}}[\xi]$ is in
the kernel of $\rho_{k-2}=\mbox{Tr}_{\overline{k-2}}P$  but
in the image of $\rho_{k-2}^{\textrm{loc}}=\mbox{Tr}_{\overline{k-2}}\left[P_{k-3,k-2}P_{k-2,k-1}\right]$.
\end{proof}

\begin{prop}
When the system has local topological order, the expected number of trials $A_{k}$ at iteration $k$ is finite and independent of the system size.
\end{prop}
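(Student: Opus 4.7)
The number of trials at iteration $k$ is a geometric random variable whose parameter is the Haar-averaged single-trial success probability $\bar p$, so the plan is to show that $\bar p$ is bounded below by a positive constant independent of $N$. Applying \eqref{eq:depolarizing-random-unitaries} to the success probability yields
\begin{equation}
\bar p = \Tr\left[P_{k-1,k}\big(\Tr_k[\psi]\otimes \id_k/D\big)\right] = \frac{1}{D}\Tr_{A_k}\left[\Tr_k[P_{k-1,k}]\,\psi_{A_k}\right],
\end{equation}
where $\psi$ is the state at the onset of the trial and $A_k$ is the bounded region supporting $\Tr_k[P_{k-1,k}]$. Because $A_k$ has constant size, $\Tr_k[P_{k-1,k}]$ acts on a fixed finite-dimensional space and its least non-zero eigenvalue $\mu>0$ is a constant depending only on local structure. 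Therefore $\bar p\geq (\mu/D)\,\Tr[\Pi_k\,\psi_{A_k}]$, with $\Pi_k$ the projector onto the image of $\Tr_k[P_{k-1,k}]$.

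The main step is to show $\Tr[\Pi_k\,\psi_{A_k}]\geq c$ for some constant $c>0$ uniformly in $\psi$. The plan is to promote the qualitative statement of Prop.~\ref{deadend} to a quantitative bound using local topological order. Since $\psi$ arises from a ground state through operations on sites $1,\ldots,k$ together with successful projections $P_{i-1,i}$ for $i<k$, it lies in the image of every $P_X$ whose support is disjoint from site $k$. Reproducing the reduction of Prop.~\ref{deadend} then forces $\psi_{A_k}$ to lie in the image of $\rho_{A_k}^{\textrm{loc}}$; LTO identifies this image with that of $\rho_{A_k}$, which is proportional to a projector. The overlap with $\Pi_k$ is thereby controlled by a constant depending only on local data of the LCPC.

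The hard part will be handling the dependence of $\psi$ on the history of failed trials within iteration $k$, since each failure projects onto the kernel of $P_{k-1,k}$ before a fresh Haar-random unitary on site $k$ is drawn. I expect the argument to carry over because every such operation is supported on a bounded neighbourhood of site $k$ and commutes with every $P_X$ disjoint from site $k$, so the structural invariants needed to invoke LTO are preserved across trials. Combining this with $\mathbb{E}[A_k]\leq 1/\bar p$ then gives $\mathbb{E}[A_k]\leq D/(\mu c)$, a constant independent of the system size.
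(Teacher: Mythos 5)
Your strategy is genuinely different from the paper's. The paper never tries to lower-bound the per-trial success probability; it resums the expectation exactly, writing $A_{k}(\psi)=\Tr[\mathcal{P}_{k-1,k}((\mathcal{I}_{k-1}-\cE_{k-1})^{-2}\otimes\cD_{k})[\psi]]$, where the \emph{biasing map} $\cE_{k-1}$ is defined by $\cE_{k-1}\otimes\cD_{k}=\cD_{k}\cQ_{k-1,k}\cD_{k}$ and $\cQ_{k-1,k}$ is the failed-measurement map. Because $\cE_{k-1}$ is a superoperator supported on a bounded region, the norm of $(\mathcal{I}-\cE_{k-1})^{-2}$ depends only on microscopic data, and Proposition~\ref{deadend} enters only to exclude $+1$ eigenvectors of $\cE_{k-1}$, i.e., to make the geometric series converge. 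Your stochastic-domination route (a uniform lower bound $\bar p\geq \mu c/D$ on the Haar-averaged conditional success probability, hence $\mathbb{E}[A_k]\le D/(\mu c)$) is a legitimate alternative in principle, and your reduction of $\bar p$ to $\frac{1}{D}\Tr\left[\Tr_k[P_{k-1,k}]\,\psi_{A_k}\right]\geq(\mu/D)\Tr[\Pi_k\psi_{A_k}]$ is correct, as is the observation that unitaries on site $k$ and the projections $\id-P_{k-1,k}$ preserve the invariant that all $P_X$ with $k\notin X$ remain satisfied.

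The gap sits in your ``main step,'' which is precisely where the content of the proposition lives. First, the claim that $\psi_{A_k}$ is supported in the image of $\rho_{A_k}^{\textrm{loc}}=\Tr_{\overline{A_k}}P_{A_k}$ is false: $P_{A_k}$ contains projectors $P_X$ with $k\in X$ (any $2\times2$ cell meeting both site $k$ and the region supporting $\Tr_k[P_{k-1,k}]$), and those are exactly the constraints the state violates during iteration $k$ --- if $\psi_{A_k}$ lay in that image there would be no excitation at all. What your invariant actually gives is support inside the image of the reduction of $\prod_{X\cap A_k\neq\emptyset,\;k\notin X}P_X$, a strictly larger subspace. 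Second, and more seriously, promoting Proposition~\ref{deadend} from ``the success probability is nonzero'' to ``at least a constant $c$ independent of $N$'' is not a formality: the reduction in that proof produces a vector $\xi$ supported on the unbounded region $\overline{R_k}$, and two subspaces with trivial intersection in a Hilbert space whose dimension grows with $N$ need not have a principal angle bounded away from zero uniformly in $N$. To get a constant you must first show that the set of admissible reduced states on the bounded region is cut out by \emph{local} data and that this local subspace meets $\ker(\Tr_k[P_{k-1,k}])$ trivially; that localization is the quantitative heart of the argument, it is where local topological order must be invoked carefully, and your sketch asserts rather than establishes it. The paper's resummation sidesteps the issue by reducing everything to the spectrum of a superoperator that manifestly lives on a bounded region.
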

\begin{proof}
We introduce two maps
\begin{eqnarray}
\mathcal{P}_{k-1,k}[\rho] & = & P_{k-1,k}\rho
P_{k,k+1}\label{eq:superprojector-satisfaction}\\
\mathcal{Q}_{k-1,k}[\rho] & = &
(\id-P_{k-1,k})\rho(\id-P_{k-1,k})\label{eq:superprojector-violation}
\end{eqnarray}
which represent a successful and failed measurement of the local constraint
$P_{k-1,k}$. 
In a failed trial, the map $\cQ_{k-1,k}$ is always immediately preceded and
followed by a depolarization of site $k$. This sequence can be rewritten in an
equivalent form, c.f. \fig{shuffling-op}
\begin{equation}
\mathcal{E}_{k-1}\otimes\mathcal{D}_{k}=\mathcal{D}_{k}\mathcal{Q}_{k-1,k}\mathcal{D}_{k}
\label{eq:shuffling}
\end{equation}
which defines a \emph{biasing map} $\cE_{k-1}$. This map is not trace-preserving
since the trace of its unnormalized output state is the average probability
of a failed trial.

\begin{figure}
\begin{centering}
\includegraphics[width=0.9\columnwidth]{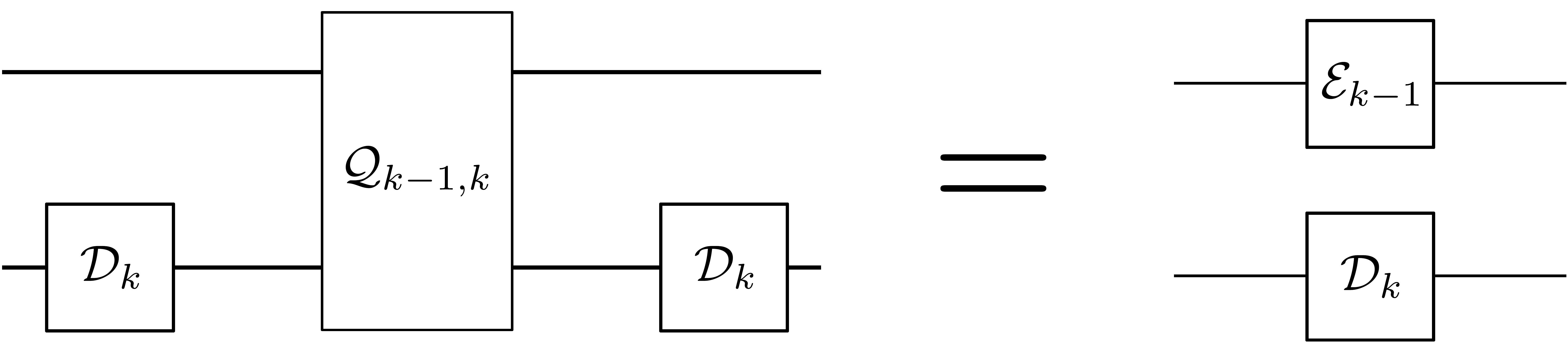}
\par\end{centering}
\caption{\label{fig:shuffling-op}Definition of the biasing map $\mathcal{E}_{k-1}$.
It is only defined on the image of $\mathcal{D}_{k}$, a projection operation.}
\end{figure}

The sequence of $m$ failed trials followed by a successful trial produces the
map 
\begin{equation}
\mathcal{P}_{k-1,k}\mathcal{D}_{k}\left(\mathcal{Q}_{k-1,k}\mathcal{D}_{k}\right)^{m}=\mathcal{P}_{k-1,k}\left(\mathcal{E}_{k-1}^{m}\otimes\mathcal{D}_{k}\right)
\label{eq:success-after-m}
\end{equation}
where we have used Eq. \eqref{eq:shuffling} and $\mathcal{D}_{k}^{2}=\mathcal{D}_{k}$.
Thus, given a state $\psi$, the average probability $p_{k}^{(m)}(\psi)$
of a success after $m$ failures is
$p_{k}^{(m)}(\psi)=\mbox{Tr}\left[\mathcal{P}_{k-1,k}\left(\mathcal{E}_{k-1}^{m}
\otimes\mathcal{D}_{k}\right)\left[\psi\right]\right]$. 
Therefore, the expected number of trials 
\begin{align}
A_{k}(\psi)= & \sum_{m=1}^{\infty}(m+1)
\mbox{Tr}\left[\mathcal{P}_{k-1,k}\left(\mathcal{E}_{k-1}^{m}\otimes\mathcal{D}_
{k}\right)\left[\psi\right]\right] \\
 = &
\mbox{Tr}\left[\mathcal{P}_{k-1,k}\left(\left(\mathcal{I}_{k-1}-\mathcal{E}_{k-1
} \right)^{-2}\otimes\mathcal{D}_{k}\right)\left[\psi\right]\right]
\label{eq:average}
\end{align}
is bounded by the norm of
the superoperator inside the trace
and thus only depends on the microscopic details of $H$, not on system
size.
Note that the geometric sum of \eq{average} converges since $\cE_{k-1}$ cannot
have $+1$ eigenvectors in the groundspace for a local
topological ordered system, since those would
contradict Proposition \ref{deadend}.
\end{proof}

\smallskip
\noindent\textit{Equivalence between models---}We now prove the equivalence
of the sequential and fortuitous models. The effect of iteration $k$ averages to
the map
\begin{eqnarray}
\mathcal{K}_{k-1,k} & = & \sum\nolimits _{m=0}^{\infty}\mathcal{P}_{k-1,k}\left(\mathcal{E}_{k-1}^{m}\otimes\mathcal{D}_{k}\right)\\
 & = & \mathcal{P}_{k-1,k}\left(\left(\mathcal{I}-\mathcal{E}_{k-1}\right)^{-1}\otimes\mathcal{D}_{k}\right)\mbox{,}
 \label{eq:site-randomization}
 \end{eqnarray}
so the total action of the sequential noise model is
\begin{equation}
\mathcal{K}=\prod\nolimits _{k=2}^{L}\mathcal{P}_{k-1,k}\left(\left(\mathcal{I}-\mathcal{E}_{k-1}\right)^{-1}\otimes\mathcal{D}_{k}\right)\mathcal{D}_{1}.
\label{eq:sequential}
\end{equation}
Terms with non-overlapping support trivially commute. We thus move
all depolarizing channels to act first, which globally depolarizes
the strip. To move the biasing operators past the projectors, it suffices to prove that $C\equiv\left[\mathcal{E}_{k},\mathcal{P}_{k-1,k}\right]\mathcal{D}_{k+1}$
is zero. Because of their non-overlapping supports, $\mathcal{D}_{k+1}$
commutes with $\mathcal{P}_{k-1,k}$ and $C=\left[\mathcal{E}_{k}\mathcal{D}_{k+1},\mathcal{P}_{k-1,k}\right]=\left[\mathcal{D}_{k+1}\mathcal{Q}_{k,k+1}\mathcal{D}_{k+1},\mathcal{P}_{k-1,k}\right] = 0$
 since $\mathcal{Q}_{k,k+1}$ and $\mathcal{P}_{k-1,k}$ commute. Hence, the terms of \eq{sequential} can be reordered into 
\begin{equation}
\mathcal{K}=\prod_{k=2}^{L}\mathcal{P}_{k-1,k}\prod_{k=2}^{L+1}\left(\mathcal{I}-\mathcal{E}_{k-1}\right)^{-1}\prod_{k=1}^{L}\mathcal{D}_{k}
\end{equation}
which has the form of the fortuitous model: complete depolarization of the strip, arbitrary transformation on the strip, and projection onto the ground space.

\textit{Discussion---} While we have focused on systems with open boundary
conditions, our results extend straightforwardly to periodic boundaries. The
noise process could begin at some arbitrary location where it would create a
pair of defects and, using the techniques we presented, wrap one of these
defects around the system. A final measurement would then attempt to fuse both
defects back into the ground space. Note that it may be necessary to wrap one
defect several times around the system to obtain a successful fusion. For
instance, Kitaev's code with a twisted boundary requires two complete wraps.

The noise process we presented corrupts the memory after a time that grows
proportionally to the system size, which can be interpreted as a (macroscopic)
upper bound to the storage time. However, there are good reasons to believe that
the actual storage time is in fact independent of the system size. At nonzero
temperature we expect a finite density of defects, so the noise process we
described could be happening in parallel all over the lattice. As pairs of
defects meet, they can fuse to the vacuum with some probability to create longer error strings. The memory time is
then related to the percolation of these error chains, which should be
independent of the system size. 

Our result does not completely close the door to the existence of a robust quantum memory in 2D. First, local topological order is a sufficient, but perhaps not necessary condition for spectral stability. Systems with a stable spectrum that do not have local topological order would escape our conclusions. 
Second,   we have restricted the form of the Hamiltonian. In realistc physical
systems, the terms $P_X$ need not to commute, the ground space can be
frustrated, and interaction can decay algebraically with the distance between
sites. Third, a macroscopic energy barrier is one mechanism that leads to
thermal stability, but they may exist other mechanism. In particular, a system
in contact with a heat bath tends to minimize its free energy $F = E-TS$. Thus,
we could imagine a system with a large entropy barrier: among all possible local
noise sequences, only a vanishingly small fraction can take one ground state to
another, while the overwhelming  majority lead to dead-ends as described above.
Such topological spin-glasses \cite{CC12} could offer an enhanced quantum memory
lifetime. This proposal is distinct from existing studies showing that disorder
induces an exponential localization of anyons \cite{TOC11,WP11,SPI+11,RWH+12},
as those only address zero-temperature storage.

\smallskip
\noindent\textit{Conclusion---} Our main result hints at a general trade-off in
2D between a quantum memory's ability to suppress thermal noise and
its stability to static perturbations. Recent discoveries
\cite{Haah11,BH11,Michnicki} show that this tradeoff is not necessary in
3D. Our result extends prior findings \cite{KC08,BT09} derived for
stabilizer codes to a broader, widely studied class of models that includes
quantum double \cite{Kitaev03}, Levin-Wen \cite{LW05}, and Turaev-Viro
\cite{KKR10} models among others.

\smallskip
\noindent\textit{Acknowledgements---} 
We thank Jeonwang Haah and John Preskill for insightful discussions.  
OLC is funded by NSERC through a Vanier Scholarship. 
This work was supported by Intelligence Advanced Research Projects Activity (IARPA) via Department of Interior National Business Center contract D11PC20167. The U.S. Government is authorized to reproduce and distribute reprints for Governmental purposes notwithstanding any copyright annotation thereon. Disclaimer: The views and conclusions contained herein are those of the authors and should not be interpreted as necessarily representing the official policies or endorsements, either expressed or implied, of IARPA, DoI/NBC, or the U.S. Government.

\bibliographystyle{apsrev4-1}
\bibliography{nogo2D}

\end{document}